\documentclass{amsart}
\usepackage{amsmath,amssymb,verbatim, graphicx}
\usepackage{latexsym,color}
\usepackage[mathscr]{eucal}

\newtheorem{thm}{Theorem}[section]

\newtheorem{cor}[thm]{Corollary}

\numberwithin{equation}{section}

\newcommand{\up}[1]{\textup{#1}}

\newcommand{\bea}{\begin{eqnarray*}}
\newcommand{\eea}{\end{eqnarray*}}

\newcommand{\while}{\texttt{while}\,}
\newcommand{\Do}{\texttt{do}\,}
\renewcommand{\top}{\mathsf{T}}
\renewcommand{\bot}{\mathsf{F}}
\newcommand{\North}{\textsf{N}}
\newcommand{\South}{\textsf{S}}
\newcommand{\East}{\textsf{E}}
\newcommand{\West}{\textsf{W}}
\newcommand{\Neon}{\textsf{neon}}
\newcommand{\Square}{\textsf{square}}
\newcommand{\comp}{\mathbin{;}}

\newcommand{\fix}{\texttt{fix}}
\newcommand{\Fix}{\texttt{Fix}}

\newcommand{\vor}{\mathbin{\vee\kern -8.5truept \vee}}
\newcommand{\bigor}{\mathop{\bigvee\kern -8.5truept \bigvee}}

\begin{document}

\title{Well structured program equivalence is highly undecidable.}

\author{Robert Goldblatt}
\address{%
School of Mathematics, Statistics and Operations Research\\ Victoria University of Wellington\\ New Zealand}
\email{%
Rob.Goldblatt@msor.vuw.ac.nz}
\author{Marcel Jackson}
\address{%
Department of Mathematics and Statistics\\ La Trobe University\\ Victoria  3086\\
Australia}
\email{%
M.G.Jackson@latrobe.edu.au}
\thanks{The second author was supported by ARC Discovery
Project Grant DP1094578.}
\begin{abstract}
\end{abstract}

\keywords{}

\begin{abstract}
We show that strict deterministic propositional dynamic logic with intersection is highly undecidable, solving a problem in the Stanford Encyclopedia of Philosophy.  In fact we show something quite a bit stronger.  We introduce the construction of program equivalence, which returns the value $\top$ precisely when two given programs are equivalent on halting computations.  We show that virtually any variant of propositional dynamic logic has $\Pi_1^1$-hard validity problem if it can express even just the equivalence of well-structured programs with the empty program \texttt{skip}.     We also show, in these cases, that the set of propositional statements valid over finite models is not recursively enumerable, so there is not even an axiomatisation for finitely valid propositions.
\end{abstract}

\maketitle

\section{Introduction}
Determinism has played an unusual role in the study of programs.  While most actual algorithms are deterministic in nature, there has traditionally been a strong theme on modeling programs nondeterministically.  Indeed the standard semantics for classic program logics such as dynamic logic, treat programs as binary relations on the state space of computer, and (in the standard relational semantics) apply constructions such as program union and reflexive transitive closure, which fall outside of conventional programming languages.   Of course, there are numerous good reasons for this: one is attempting to reason about programs more than reason from within them.  Stating that ``property $\alpha$ is true after some number of iterates of $p$'' is a useful assertion to make and close to the kind of questions that need to be asked in applications such as formal program verification.

Another occasionally cited reason for the focus on nondeterminism is that logics based over deterministic programs (partial functions) are known to experience an unexpected explosion in complexity.  In fact this is only half true.  Satisfiability for strict deterministic \textsf{PDL} (deterministic program variables, and program union and $^*$ replaced by only conventional constructions of structured programming: \texttt{if-then-else} and \texttt{while-do}) is only \texttt{PSPACE}-complete \cite{HR}, while the full \textsf{PDL} (over nondeterministic programs), and even strict \textsf{PDL} has \texttt{EXPTIME}-complete complexity (see \cite{HKT} for these and other similar results).  However the introduction of program intersection produces enormous contrast. Standard (that is, nondeterministic) \textsf{PDL} with intersection is decidable \cite{dan}, albeit doubly exponential time complete \cite{lanlut} (a result that has recently been extended to \textsf{PDL} with intersection and converse \cite{gollohlut}) while Harel showed that \emph{deterministic} \textsf{PDL} with intersection (\textsf{DIPDL}) has a $\Pi_1^1$-hard satisfiability problem, at the first level of the analytic hierarchy!

Strangely, it seems unknown what happens between the relatively well behaved \textsf{SDPDL} and the unimaginably badly behaved \textsf{DIPDL}.  The decidability of strict deterministic propositional dynamic logic with intersection (\textsf{SDIPDL}) appears open and indeed is stated as such in the Winter 2008 edition of the Stanford Encyclopedia of Philosophy \cite{balSEP}.  While program intersection is not a conventionally encountered programming construction, it is easy to simulate the intersection of two actual programs $p$ and $q$ and return the result when and if they both halt and agree.  Thus it is an available construct of conventional programming even if it is not expressible within the language of \textsf{SDPDL}.

Recently the second author (with Tim Stokes) has examined algebraic formulations of deterministic program logics and produced a very simple axiomatisation for the loop-free fragment of \textsf{SDIPDL} \cite{jacsto}.  The validity problem of this fragment is easily seen to be \texttt{NP}-complete (by guessing a finite validating model of size polynomial in the complexity of a given formula).  The authors of \cite{jacsto} were rather hopeful that despite Harel's famous negative result for \textsf{DIPDL}, the strict fragment might still be decidable.  In the present article we show this is not the case: \textsf{SDIPDL} also suffers $\Pi_1^1$-hardness.  In fact we show a more general result that concerns variants of \textsf{PDL} that are not necessarily deterministic.  We identify a natural notion of ``program equivalence'' and show that this inevitably leads to $\Pi_1^1$-hardness when expressible in a variant of \textsf{PDL}, independently of the constraint of deterministic atomic programs.  The $\Pi_1^1$-hardness of \textsf{SDIPDL} can be explained by the fact that in deterministic variants of \textsf{PDL}, intersection can be used to express program equivalence.

We also show that for variants of \textsf{PDL} capable of expressing program equivalence (such as \textsf{SDIPDL}) there is no axiomatisation possible for the propositions satisfiable on finite relational models.

\section{Program constructions}
The usual semantics for program intersection is simply set-theoretic intersection of binary relations.  Thus the program $p\cap q$ relates state $s$ to state $t$ provided that both $p$ and $q$ relate $s$ to $t$.  However even if $p\cap q$ relates state $s$ to $t$, enacting $p$ and state $s$ might give rise to some $t'$ outside of the range of the relation $q$.  We consider a reasonable variant of intersection, which we refer to as \emph{program equivalence}.  For programs $p$, $q$, the proposition $p\bowtie q$ (``$p$ tie $q$'', or ``$p$ is equivalent to $q$'') is true at a point $a$ if $p$ is equivalent to $q$ at $a$: in the relational semantics, $p\bowtie q$ has truth set equal to 
\[
\{a\mid (\forall b)\, (a,b)\in p\leftrightarrow (a,b)\in q\}.
\]
Program equivalence can be expressed in \textsf{SDIPDL} as $\langle p\cap q\rangle\top\vee \neg(\langle p\rangle\top\vee \langle q\rangle\top)$.   And, provided query is included, \textsf{SDPDL} with program equivalence can express intersection: $p\cap q=(p\bowtie q)?\comp p$.

Our main results will use a construction weaker than program equivalence.  Consider the unary operation $\Fix$ acting on programs $p$ to produce a proposition $\Fix(p)$ that asserts that halting computations of $p$ act effectlessly.  In the relational semantics, 
\[
\Fix(p)=\{a\mid (\forall b)\, (a,b)\in p\rightarrow a=b\}.
\]
Our main results are expressed in terms of $\Fix$, however in proofs it is more convenient to use a construction $\fix(p)$, which we define as $\Fix(p)\wedge \langle p\rangle\top$.  Note that $\Fix(p)=\fix(p)\vee [p]\bot$, so that $\fix$ and $\Fix$ are interdefinable in any reasonable variant of $\textsf{PDL}$.  But also, $\fix$ (whence $\Fix$) can be expressed in terms of program equivalence as $p\bowtie \texttt{skip}$ (hence it is expressible if $\cap$ is expressible in the deterministic case).  
On the other hand, $\bowtie$ cannot be expressed using $\Fix$ because one can find models of  \textsf{DPDL} that are closed under $\Fix$ but not under program equivalence (we omit the details of this claim).

A key observation in this note is that expressions of the form $[x^*]\alpha$ are expressible in the language of well-structured programs (provided that $x$ and $\alpha$ are): as $[\while \alpha\, \Do x]\bot$.    Expressions of the form $[(x\cup y)^*]\alpha$ 
are fundamental to Harel's original proof of the high undecidability of \textsf{DIPDL}: they are used to interpret an infinite grid.  Expressions of this form are not in general expressible in strict forms of \textsf{PDL}, however the presence of \texttt{fix} enables something similar to be done in enough cases to encode tiling problems.

\section{Tilings}
The undecidability results are proved by encoding tiling problems as originally employed by Harel \cite{har}.
A \emph{finite set of square tiles} is a finite set $\mathcal{T}=\{T_0,\dots,T_{k-1}\}$ of ``\emph{tiles}'' endowed with a pair of binary ``edge'' relations $\sim_h$ (horizontal) and $\sim_v$ (vertical).  We interpret $T_i\mathrel{\sim_h}T_j$ to mean that tile $T_i$ can be placed on the left of tile $T_j$ in a horizontal row.  Likewise $T_i\mathrel{\sim_v}T_j$ is interpreted to mean that $T_i$ can be placed beneath $T_j$ in a vertical column.  A natural and very standard geometric restriction is that if $T_i\mathrel{\sim_h}T_j$ and $T_k\mathrel{\sim_h}T_j$ and $T_k\mathrel{\sim_h}T_\ell$, then $T_i\mathrel{\sim_H}T_\ell$ also.   We will not make use of this restriction, though assuming it does not affect the computational complexity of the tiling problems we consider.

Consider the non-negative integer lattice $\omega\times\omega$ endowed with relations $\sim_h$ and $\sim_v$ defined by $(i,j)\mathrel{\sim_h}(i+1,j)$ and $(i,j)\mathrel{\sim_v}(i,j+1)$ for all $i,j\geq 0$ (here of course, lattice is referring to square grids rather than ordered sets).  A \emph{tiling of the positive quadrant of the plane} (henceforth, a \emph{tiling of the plane}) is a function from $\omega\times\omega$ into $\mathcal{T}$ that preserves the relations $\sim_h$ and $\sim_v$.  Tilings of $\mathbb{Z}\times\mathbb{Z}$ are defined analogously.

We use two fundamental facts on tiling the plane.
\begin{itemize}
\item {\bf Tiling Fact 1.} The following problem is $\Sigma_1^1$-complete.  Given a finite set of tiles $\mathcal{T}$ with distinguished subset $\mathcal{N}$ of ``neon'' tiles.  Is there is a tiling of the plane $\tau$ in which $\tau(0,0)=T_0$ and that $\tau^{-1}(\mathcal{N})\cap \{(i,i)\mid i\in\omega\}$ is infinite (that is, the diagonal contains infinitely many neon tiles).

\item {\bf Tiling Fact 2.} Let $S_{\rm period}$ denote the set of finite sets of square tiles that can tile $\mathbb{Z}\times \mathbb{Z}$ periodically, and let $S_{\rm no tiling}$ denote the set of finite sets of tiles that cannot tile the plane at all.  Then  $S_{\rm period}$ is recursively inseparable from $S_{\rm no tiling}$.
\end{itemize}

Tiling Fact 2 can be found in B\"oger, Gr\"adel and Gurevich \cite[Theorem 3.1.7]{BGG}: tiling periodically means that there is a tiling of $\mathbb{Z}_n\times\mathbb{Z}_m$, with the obvious toroidal adjacency constraints (work modulo $n$ horizontally and modulo $m$ vertically).  Tiling Fact 1 is a minor variant of some well known tiling problems investigated by Harel; see \cite{har2} or \cite{HKT} for example.  We now give a brief sketch of a proof of the $\Sigma_1^1$-completeness claim.  In \cite[p.~233]{har2}, Harel shows that the following problem is $\Sigma_1^1$-complete: given a nondeterministic Turing machine program $T$, with initial state $q_0$ and started on a one-way infinite blank tape, does $T$ return to the state $q_0$ infinitely often? We now reduce this problem to the problem in Tiling Fact 1.   We use a modification of the standard translation of Turing machines into tiles, as presented, say, by Robinson~\cite{rob}.  Using the nomenclature of Robinson's article, there are essentially four kinds of tile (aside from the blank tile which we will not need, as we're only tiling the positive quadrant): the initial tiles (including one designated start tile $T_0$), the merge tiles, the action tiles and the alphabet tiles.  The action tiles are constructed according to the commands of the Turing machine program.  Provided that $T_0$ is placed at the position $(0,0)$, the tiling can only be completed to the $n$th row if the program can run for $n$ steps of computation without halting.  Moreover, each successfully tiled row encodes the configuration of the Turing machine tape at the corresponding step of computation.

Now duplicate all tiles except initial tiles and action tiles.  For each duplicated tile, we make the second copy ``neon'', and adjust the horizontal edge constraints to ensure that neon tiles can be placed horizontally adjacent only to other neon tiles (and even then, only if they additionally satisfy the original edge constraints).  Vertical constraints are unchanged however.  Now, replace every action tile that encodes a transition into the state $q_0$, by a neon copy.  These tiles are not to be duplicated: they are only neon.  Also, action tiles not involving a transition into $q_0$ are never neon.  Then, in any tiling of the plane, a row containing a neon tile must contain only neon tiles.  Since each successfully tiled row can contain precisely one action tile, the following are equivalent: there is a computation that revisits state $q_0$ infinitely often; there is a tiling of the plane starting from $T_0$ and in which infinitely many rows are neon; there is a tiling of the plane starting from $T_0$ and in which infinitely neon tiles are placed on the diagonal.  As the first of these is $\Sigma_1^1$-complete, so the problem in Tiling Fact 1 is $\Sigma_1^1$-hard.  Completeness follows in the usual way.

\section{Main argument}
Let $\mathcal{T}=\{T_0,\dots,T_{k-1}\}$ be some fixed finite set of tiles.  For $i=0,1,\dots,k-1$ we let $\alpha_i$ denote an atomic proposition variable which we think of as corresponding to the placement of tile $T_i$.  
In order to produce our $\omega\times\omega$ grid we introduce four atomic program variables: $\East$, $\West$, $\South$ and $\North$.  Squares of the grid will be created by asserting statements of the form $\texttt{fix}(\North\comp\East\comp\South\comp\West)$.  We first define the propositions required, then explain how these force a tiling.

{\bf Step 0.} Defining a square.    We need to be able to find squares in both clockwise and anti-clockwise directions.  We encode the clockwise square by the following proposition:
\[
\fix(\North\comp\South)\wedge [\North]\fix(\East\comp\West)\wedge [\North\comp \East]\fix(\South\comp\North)\wedge
[\North\comp \East\comp \South]\fix(\West\comp\East)\wedge \fix(\North\comp\East\comp\South\comp\West).
\]
The anticlockwise square is defined in the dual way, following partial paths through $\East\comp\North\comp\West\comp\South$.  We denote the conjunction of the two square propositions by \Square.

{\bf Step 1.} To define a grid we use the statement  $\rho_1$:
\[
[\North^*][\East^*]\Square
\]
which, as observed above, can be expressed using only modal operators and the language of well-structured programs (instead of $^*$).

{\bf Step 2.} To force a tiling, we first let $\alpha$ denote the proposition that asserts that precisely one of the $\alpha_i$ is true.  Then, for each $i$, let $\beta_i$ denote the disjunction of all the atomic tile propositions $\alpha_j$ for which $T_i\sim_hT_j$.  Similarly, we let $\beta^i$ denote the disjunction of the atomic tile propositions $\alpha_j$ for which $T_i\sim_vT_j$.  Then, provided we have an $\omega\times \omega$ grid, a tiling can be forced by $\rho_2$:
\[
[\North^*][\East^*] \left(\alpha\wedge \bigwedge_{i=0}^{k-1}\left(
\alpha_i\Rightarrow ([\East]\beta_i \wedge [\North]\beta^j)\right)\right)
\]

{\bf Step 3.} To force infinitely many neon tiles in the diagonal, first let \Neon\ denote the disjunction of the atomic neon tile propositions.  Then we use $\rho_3$:
\[
[(\North\comp\East)^*]\langle(\North\comp\East)^*\rangle\Neon.
\]

\begin{thm}\label{thm1}
Fix any variation \textsf{VPDL} of \textsf{PDL} capable of expressing the usual connectives on propositions, program composition, \texttt{while}-\texttt{do}, modal operators and $\fix$.  The validity problem for \textsf{VPDL} is $\Pi_1^1$-hard, regardless of whether atomic programs are assumed to be deterministic or not.
\end{thm}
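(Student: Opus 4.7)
The plan is to reduce the $\Sigma_1^1$-complete neon-diagonal tiling problem of Tiling Fact 1 to satisfiability in $\textsf{VPDL}$. Since a formula $\psi$ is not valid iff $\neg\psi$ is satisfiable, $\Sigma_1^1$-hardness of satisfiability immediately yields the claimed $\Pi_1^1$-hardness of validity. For a given finite tile set $\mathcal{T}$ with distinguished start tile $T_0$ and neon subset $\mathcal{N}$, I would form the sentence $\phi_\mathcal{T} := \alpha_0 \wedge \rho_1 \wedge \rho_2 \wedge \rho_3$ from Steps 0--3 and argue that $\phi_\mathcal{T}$ is satisfiable in a relational model iff $\mathcal{T}$ admits a tiling of $\omega\times\omega$ with $T_0$ at the origin and infinitely many neon tiles on the diagonal. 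The sentence $\phi_\mathcal{T}$ lies in the required fragment because $[x^*]\alpha$ is expressible via \texttt{while}--\texttt{do}, as noted in the excerpt, and the remaining constructs are propositional connectives, modalities, composition, and $\fix$.

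For the easy (tiling $\Rightarrow$ satisfiability) direction, the tiling induces the obvious model on $\omega\times\omega$, with $\North,\East$ as unit-increment successor relations and $\South,\West$ as their converses, and with $\alpha_i$ true exactly where $T_i$ is placed. Each $\fix(p\comp q)$ conjunct in $\Square$ holds because $p\comp q$ acts as the identity on its domain, $\rho_1$ propagates $\Square$ over the whole grid, $\rho_2$ records the tiling adjacency constraints, and $\rho_3$ records the infinitude of neon tiles on the diagonal.

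For the harder (satisfiability $\Rightarrow$ tiling) direction, given a model $M$ and a state $a\models\phi_\mathcal{T}$, I would inductively select a family $\{a_{i,j}\}_{(i,j)\in\omega\times\omega}$ together with specific $\North$- and $\East$-arrows between them, using as canonical indexing the $\North$-then-$\East$ path starting from $a_{0,0}:=a$. Existence of the required successors follows from the $\langle p\rangle\top$ clause built into each $\fix(p)$ conjunct of $\Square$, while $\rho_1=[\North^*][\East^*]\Square$ guarantees that $\Square$ persists at every $a_{i,j}$. Setting $\tau(i,j):=T_k$ when $\alpha_k$ holds at $a_{i,j}$, the formula $\rho_2$ enforces the horizontal and vertical adjacency constraints on $\tau$, and $\rho_3$ yields the required infinitely many neon diagonal entries.

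The main obstacle is showing that $\tau$ is actually a well-defined tiling in the nondeterministic setting: a priori the $\North$-successor of $a_{i,j}$ reached along the canonical indexing from $a_{0,0}$ need not agree with $a_{i,j+1}$, which was produced by a different composite $\North/\East$ path. Reconciling these is precisely why both the clockwise and anticlockwise $\Square$ propositions are included: each 2$\times$2 cell admits a closed oriented loop in both orientations, and the intermediate conjuncts (e.g.\ $\fix(\North\comp\South)$ and $[\North]\fix(\East\comp\West)$) force every reverse edge to undo the corresponding forward edge exactly. Choosing successors compatibly by induction on the lexicographic order of grid positions, and using the two squares to reconcile the $\North$-then-$\East$ and $\East$-then-$\North$ arrivals at $(i+1,j+1)$, one extracts a consistent sub-grid of $M$ on which the adjacency constraints read off from $\rho_2$ define a valid tiling. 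Carrying out this commutation argument rigorously, especially without the crutch of deterministic atomic programs, is the technically delicate step I expect to absorb most of the proof.
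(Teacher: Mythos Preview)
Your overall strategy and the formula $\phi_{\mathcal T}=\alpha_0\wedge\rho_1\wedge\rho_2\wedge\rho_3$ match the paper exactly, and your identification of the commutation issue and its resolution via the two orientations of $\Square$ is correct in spirit. But there is a genuine gap in how you propose to extract infinitely many neon tiles on the diagonal.

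You plan to build the grid points $a_{i,j}$ first, using a canonical $\North$-then-$\East$ indexing together with a lexicographic reconciliation, and only afterwards invoke $\rho_3$ to conclude that the diagonal $\{a_{n,n}\}$ carries infinitely many neon tiles. The problem is that $\rho_3=[(\North\comp\East)^*]\langle(\North\comp\East)^*\rangle\Neon$ only guarantees that from each $a_{n,n}$ there is \emph{some} $(\North\comp\East)^*$-path to a neon state. In a nondeterministic model that path may use $\North$- and $\East$-edges other than the specific ones you have already committed to; the neon state it reaches need not be any $a_{m,m}$ of your grid. So nothing in your construction forces neon to occur on \emph{your} diagonal, and the tiling $\tau$ you read off may have no neon tiles on its diagonal at all.

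The paper avoids this by reversing the order of construction. It first uses the diamond in $\rho_3$ at $a_{0,0}$ to \emph{choose} a concrete alternating $\North,\East$ path to a neon point, labelling the visited states $a_{0,1},a_{1,1},a_{1,2},a_{2,2},\dots$, and only then fills in the off-diagonal points by working outward from this chosen diagonal, using the clockwise half of $\Square$ below it and the anticlockwise half above it (this is why both orientations are needed). Having reached a neon $a_{m,m}$, one steps once more along $\North\comp\East$ (which $\Square$ guarantees is defined) and reapplies $\rho_3$ there to extend to a later neon point, iterating indefinitely. Thus the diagonal is \emph{defined by} the $\rho_3$ witnesses, and the grid is completed around it; your proposal builds the grid independently and then expects $\rho_3$ to respect it, which it need not. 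The commutation argument you highlight is indeed needed, but it is not the only delicate point: the interleaving of $\rho_3$ with the grid construction is essential.
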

\begin{proof}
For any set of tiles $\mathcal{T}$, with neon subset $\mathcal{N}$, let $\gamma$ denote $\alpha_0\wedge \rho_1\wedge\rho_2\wedge\rho_3$.  We claim that the following are equivalent:
\begin{enumerate}
\item $\mathcal{T}$ can tile the positive quadrant of the plane with infinitely many neon tiles on the diagonal and with $T_0$ in the $(0,0)$ position; 
\item $\gamma$ can be satisfied in some relational model where all atomic programs are deterministic (even injective partial functions);
\item $\gamma$ can be satisfied in some relational model.
\end{enumerate}
Implication $1\Rightarrow2$ is routine, while $2\Rightarrow 3$ is trivial.  Now assume that $\gamma$ is satisfied at some point of a relational model.  We label this point by $a_{0,0}$.  Now by $\rho_1$ we have that $\Square$ holds at $a_{0,0}$.  Thus, the program $\North\comp\East$ is defined at $a_{0,0}$, because $a_{0,0}$ is fixed by $\North\comp\East\comp\South\comp\West$.  Then by $\rho_3$, there is a nontrivial iterate of $\North\comp\East$ at which $\Neon$ is true.  Thus there is a path of edges from $a_{0,0}$ alternating $\North$ and $\East$ and leading to a position at which $\Neon$ is true.   We label the points visited along this path (after $a_{0,0}$) by $a_{0,1}$, $a_{1,1}$, $a_{1,2}$, $a_{2,2},\dots$; see the left picture in Figure \ref{fig1}.  We do not rule out the possibility that some points in the model are labelled more than once: to produce the tiling, we consider only the labels of the selected points

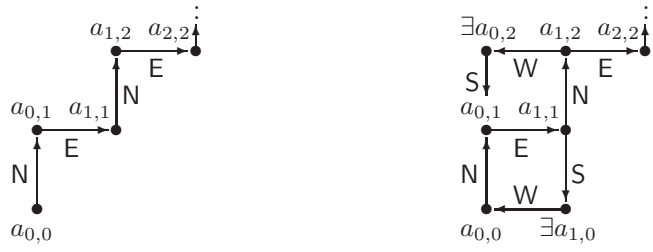
\begin{figure}
\begin{picture}(300,110)
\put(10,10){
\begin{picture}(120,100)
\put(10,10){\circle*{4}}\put(0,0){$a_{0,0}$}
\put(10,40){\circle*{4}}\put(0,46){$a_{0,1}$}
\put(40,40){\circle*{4}}\put(22,46){$a_{1,1}$}
\put(70,70){\circle*{4}}\put(52,76){$a_{2,2}$}
\put(40,70){\circle*{4}}\put(30,76){$a_{1,2}$}
\put(10,10){\vector(0,1){27}}\put(0,20){$\North$}
\put(10,40){\vector(1,0){27}}\put(20,30){$\East$}
\put(40,70){\vector(1,0){27}}\put(42,50){$\North$}
\put(40,40){\vector(0,1){27}}\put(52,60){$\East$}
\put(70,70){\vector(0,1){10}}\put(69,82){$\vdots$}
\end{picture}
}
\put(180,10){
\begin{picture}(120,100)

\put(10,10){\circle*{4}}\put(0,0){$a_{0,0}$}
\put(10,40){\circle*{4}}\put(0,46){$a_{0,1}$}
\put(40,40){\circle*{4}}\put(22,46){$a_{1,1}$}
\put(70,70){\circle*{4}}\put(52,76){$a_{2,2}$}
\put(40,70){\circle*{4}}\put(30,76){$a_{1,2}$}
\put(10,10){\vector(0,1){27}}\put(0,20){$\North$}
\put(10,40){\vector(1,0){27}}\put(20,30){$\East$}
\put(40,70){\vector(1,0){27}}\put(42,50){$\North$}
\put(40,40){\vector(0,1){27}}\put(52,60){$\East$}
\put(70,70){\vector(0,1){10}}\put(69,82){$\vdots$}
\put(40,10){\circle*{4}}\put(30,0){$\exists a_{1,0}$}
\put(40,10){\vector(-1,0){27}}\put(20,12){$\West$}
\put(40,40){\vector(0,-1){27}}\put(42,20){$\South$}

\put(10,70){\circle*{4}}\put(0,76){$\exists a_{0,2}$}
\put(40,70){\vector(-1,0){27}}\put(20,60){$\West$}
\put(10,70){\vector(0,-1){17}}\put(2,55){$\South$}

\end{picture}
}
\end{picture}
\caption{Selecting the points $a_{i,j}$, and completing the $\omega\times\omega$ grid.}\label{fig1}
\end{figure}

Now as $\Square$ holds at $a_{0,0}$, we have that after $\North\comp\East$ it is necessary that $\fix(\South\comp \North)$ hold.  Hence, in particular there is a point $a_{1,0}$ that is reached by an application of $\South$ from the point $a_{1,1}$.  Again applying $\Square$ at $a_{0,0}$, we have that after applying $\North\comp\East\comp \South$ it is necessary that $\fix(\West\comp \East)$.  Thus in particular, there is a point $a_{0,0}'$ west of $a_{1,0}$.  However $a_{0,0}'$ is reached by an application of $\North\comp\East\comp \South\comp \West$, which by $\Square$ must fix $a_{0,0}$.  Hence $a_{0,0}'=a_{0,0}$.  

Similarly, $\rho_1$ ensures that $\Square$ is true at the point $a_{0,1}$.  We now construct a square anticlockwise through points $a_{0,1}$, $a_{1,1}$, $a_{1,2}$ and some new point $a_{0,2}$.  The idea is essentially dual to the previous case: after applying $\East;\North$ (reaching $a_{1,2}$), it is necessary that $\fix(\West\comp\East)$ be defined, thus we encounter some new point $a_{0,2}$.  From here a further $\South$ is forced, and then as $\East\comp\North\comp\West\comp\South$ fixes $a_{0,1}$, we have the desired square.  

So far we have not used all the power of the proposition $\Square$: in the right hand picture in Figure \ref{fig1}, the bottom left square has a different orientation to the square above it.  However, each time we extended a new arrow from a point, we did so by way of propositions of the form $\fix(\East\comp\West)$ (and so on): thus in fact every arrow drawn has an associated converse arrow labelled with the appropriate dual name ($\East$ switched with $\West$ and $\North$ switched with $\South$).  Once these edges are also drawn, both squares so far obtained are identical (two-way edges, with dual labels).  So in fact, the process can be continued, working out outward from the central diagonal (with clockwise constructions below the horizontal and anti-clockwise constructions above)
until a rectangular grid has been formed.  

Then we apply $\rho_3$ a further time: extending the diagonal to a new point $a_{n,n}$ where $\Neon$ is defined, and filling out the remaining pieces of a larger rectangle and so on.

In this way an infinite grid is interpreted, with neon tile propositions holding at infinitely many places on the diagonal.  Furthermore, every position in this grid can now be visited by first iterating $\East$ and then iterating $\North$.  Now $\gamma$ forces $\alpha_{0}$ to be true at $a_{0,0}$.  And then, working inductively outward from $a_{0,0}$, the proposition $\rho_2$ ensures that a tiling proposition holds at every one of the selected points and that neighbouring squares (horizontally or vertically) have tiling propositions that match the tiling constraints. Thus we interpreted  a tiling of the positive quadrant of the plane in which neon tiles occur infinitely often along the diagonal.  As the problem in Tiling Fact 1 is $\Sigma_1^1$-complete, thus satisfiability for \textsf{VPDL} is $\Sigma_1^1$-hard and validity is $\Pi_1^1$-hard.
\end{proof}
Recall that if atomic programs are deterministic, then intersection can be used to define $\fix$ on well-structured programs.  This gives the following corollary.
\begin{cor}
Satisfiability for \textsf{SDIPDL} is $\Pi_1^1$-hard.
\end{cor}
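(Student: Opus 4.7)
The plan is to deduce the corollary as a direct specialization of Theorem~\ref{thm1} with $\textsf{VPDL}:=\textsf{SDIPDL}$. SDIPDL transparently provides the usual propositional connectives, program composition, $\texttt{while}$-$\texttt{do}$ and modal operators; the only substantive hypothesis left to verify is expressibility of $\fix$.

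For this I would use the fact that in SDIPDL every program denotes a partial function. Atomic programs are deterministic by definition, and each permitted constructor---sequential composition, tests, $\texttt{if}$-$\texttt{then}$-$\texttt{else}$, $\texttt{while}$-$\texttt{do}$, and intersection---preserves functionality. The intersection case is the only non-routine one, and it is immediate: if partial functions $p$ and $q$ both relate $a$ to $b$ and $a$ to $b'$, then each separately forces $b=b'$, so $p\cap q$ is again a partial function. Consequently, for any SDIPDL program $p$,
$$
\fix(p)\ \equiv\ \langle p\cap\texttt{skip}\rangle\top,
$$
with $\texttt{skip}$ expressed as the test $\top?$; both sides assert that $p$ has a (necessarily unique) output at the current state and that this output is the current state itself.

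With $\fix$ thus expressible, Theorem~\ref{thm1} applies directly and yields the stated hardness bound. There is no further obstacle---the content of the corollary is precisely the observation that, in the deterministic setting, intersection realises $\fix$, so that the grid-encoding construction of Theorem~\ref{thm1} transports into SDIPDL without modification.
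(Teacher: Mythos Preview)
Your proposal is correct and follows essentially the same approach as the paper: the paper's proof consists solely of the remark preceding the corollary that, when atomic programs are deterministic, intersection can be used to define $\fix$ on well-structured programs, whence Theorem~\ref{thm1} applies. Your version simply fleshes out this one-line argument by explicitly verifying that determinism propagates through all the \textsf{SDIPDL} constructors and by writing down the concrete formula $\fix(p)\equiv\langle p\cap\texttt{skip}\rangle\top$ (which is exactly the specialization to $q=\texttt{skip}$ of the paper's expression $\langle p\cap q\rangle\top\vee\neg(\langle p\rangle\top\vee\langle q\rangle\top)$ for $p\bowtie q$).
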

Consider the operation of \emph{program difference}: 
\[
p-q:=\{(a,b)\mid (a,b)\in p\mbox{ and} (a,b)\notin q\}.
\]
It is well known that standard \textsf{PDL} with program \emph{complementation} is undecidable (see \cite[Theorem 10.12]{HKT}).  Program difference can be expressed in terms of program complementation, but the reverse need not be true in the absence of a universal program (that is, the universal relation in the relational semantics).  As a second corollary, we show that standard \textsf{PDL} with program difference is $\Pi_1^1$-hard.
\begin{cor}
\textsf{PDL} with program difference \up(whence with program complementation\up) is $\Pi_1^1$-hard.
\end{cor}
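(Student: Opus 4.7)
The plan is to reduce the corollary to Theorem~\ref{thm1} by showing that \textsf{PDL} augmented with program difference can express the construct $\fix$. Standard \textsf{PDL} already supplies the propositional connectives, modal operators, program composition, and \texttt{while}-\texttt{do} (the last being definable as $(\alpha?\comp p)^*\comp(\neg\alpha)?$), so the only genuinely new obligation is to express $\fix(p)$ using program difference.

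The key move I would make is to recognise that the identity relation is always available in standard \textsf{PDL} as the test $\top?$, which I identify with \texttt{skip}. For any program $p$, the program $p - \texttt{skip}$ retains exactly those pairs $(a,b)\in p$ with $a\neq b$. Hence $\langle p - \texttt{skip}\rangle \top$ holds at a state $a$ iff $p$ relates $a$ to some state distinct from $a$, which means $[p - \texttt{skip}]\bot$ captures $\Fix(p)$ exactly. Conjoining with $\langle p\rangle \top$ then yields
\[
\fix(p)\;\equiv\;[p - \texttt{skip}]\bot \wedge \langle p\rangle \top,
\]
so Theorem~\ref{thm1} applies verbatim and delivers $\Pi_1^1$-hardness for validity in \textsf{PDL} with program difference.

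For the parenthetical claim about program complementation, I would observe that since program union is a standard \textsf{PDL} construct, De Morgan's law yields $p - q \equiv \overline{\bar{p}\cup q}$; consequently, program complementation is at least as expressive as program difference, and the hardness result propagates. There is no serious obstacle: the proof is essentially a one-line translation once Theorem~\ref{thm1} is in place, the only subtlety worth flagging being that the combination of program difference with the tautologous test $\top?$ is precisely what lets us detect whether a $p$-step from $a$ can depart from $a$, which is all that $\Fix$ asks about.
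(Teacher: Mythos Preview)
Your proof is correct and follows essentially the same route as the paper: both express $\fix(p)$ by using program difference to strip away the diagonal and then asserting that nothing remains reachable. The paper takes a slightly roundabout path, first noting $p\cap q = p-(p-q)$ and then writing $\fix(p)=\langle p\rangle\top\wedge[p-(p\cap\texttt{skip})]\bot$; your version $[p-\texttt{skip}]\bot\wedge\langle p\rangle\top$ is a bit cleaner since $p-(p\cap\texttt{skip})=p-\texttt{skip}$ anyway, but the underlying idea is identical.
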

\begin{proof}
First observe that program intersection can be expressed from program difference: $p\cap q=p-(p-q)$.  Now observe that $\fix(p)=(\langle p\rangle \top)\wedge ([p-(p\cap \texttt{skip})]\bot)$.
\end{proof}

\begin{thm}\label{thm2}
Fix any variation \textsf{VPDL} of \textsf{PDL} capable of expressing the usual connectives on propositions, program composition, \texttt{while}-\texttt{do}, modal operators and $\fix$. The set of \textsf{VPDL} propositions valid over finite relational models of \textsf{VPDL} is not recursively enumerable, whence there is no axiomatisation for \textsf{VPDL} over finite models.
\end{thm}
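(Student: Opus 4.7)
The plan is to reduce Tiling Fact~2 to finite satisfiability of \textsf{VPDL}, by reusing the encoding from Theorem~\ref{thm1} stripped of the starting-tile clause $\alpha_0$ and the diagonal-neon clause $\rho_3$. Set $\gamma'_\mathcal{T} := \rho_1 \wedge \rho_2$, so the formula asserts only ``this is a grid and it carries a consistent $\mathcal{T}$-tiling''.

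First I would check that if $\mathcal{T}$ tiles $\mathbb{Z}_n \times \mathbb{Z}_m$ periodically then $\gamma'_\mathcal{T}$ has a finite model: take the torus $\mathbb{Z}_n \times \mathbb{Z}_m$ itself, interpret $\North,\East,\South,\West$ as the four cyclic shift bijections, and make $\alpha_i$ true exactly at positions labelled $T_i$. Every composition inside $\Square$ collapses to the identity on the torus, so each $\fix$ clause of $\Square$ holds at every point; hence $\rho_1$ is satisfied everywhere, and the tile adjacency constraints give $\rho_2$.

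Second I would show that if $\gamma'_\mathcal{T}$ is satisfied at any point $a_{0,0}$ of any model $M$, then $\mathcal{T}$ tiles $\omega \times \omega$ (and therefore $\mathcal{T} \notin S_{\rm no\,tiling}$). The construction is essentially the induction of Theorem~\ref{thm1}, except that the role played there by $\rho_3$ (extending the diagonal one more step to refuel the argument) is now played by the self-propagation of $\rho_1 = [\North^*][\East^*]\Square$: this formula holds at every $\North^*$- and $\East^*$-successor of $a_{0,0}$, and the $\langle p\rangle\top$ conjunct built into each $\fix$ inside $\Square$ guarantees a next $\North$- and $\East$-successor at every stage. The clockwise and anticlockwise halves of $\Square$ then supply the inverse arrows and force $\North$ to commute with $\East$ along the way, exactly as in Theorem~\ref{thm1}, producing a grid-homomorphism $\omega \times \omega \to M$; reading off the unique $\alpha_i$ true at each image point (using $\rho_2$) yields the desired tiling of $\omega \times \omega$.

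Combining the two steps, the recursive map $\mathcal{T} \mapsto \gamma'_\mathcal{T}$ sends $S_{\rm period}$ into the set $R$ of finitely satisfiable \textsf{VPDL} propositions and sends $S_{\rm no\,tiling}$ into its complement. Since $R$ is r.e.\ (enumerate finite pointed models and model-check), if $R$ were also recursive then its preimage under the reduction would recursively separate $S_{\rm period}$ from $S_{\rm no\,tiling}$, contradicting Tiling Fact~2. Hence $R$ is not co-r.e., and therefore the set of \textsf{VPDL} propositions valid on all finite models is not r.e. The main obstacle is Step~2: Theorem~\ref{thm1}'s presentation leaned explicitly on $\rho_3$ to keep extending the grid, so one must verify carefully that $\rho_1$'s self-propagation together with the invertibility and commutativity packaged into $\Square$ genuinely suffice to push the construction to infinity without any such external engine.
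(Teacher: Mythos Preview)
Your proposal is correct and matches the paper's own proof essentially line for line: the paper also sets $\gamma_\mathcal{T}=\rho_1\wedge\rho_2$, builds the torus model for the periodic case, uses $\rho_1$ (in place of $\rho_3$) to push out the diagonal and then fill the grid as in Theorem~\ref{thm1}, and concludes via recursive inseparability exactly as you do. The ``main obstacle'' you flag is precisely the one point the paper pauses on, and it resolves it the same way you suggest---$\Square$ at each newly reached diagonal point supplies the next step.
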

\begin{proof}
Consider a finite set of tiles $\mathcal{T}$, and consider the proposition $\gamma_\mathcal{T}:=\rho_1\wedge \rho_2$.  We first show that if $\gamma_\mathcal{T}$ is satisfied at some point $a_{0,0}$ in a model then $\mathcal{T}$ can tile the plane (whence $\mathcal{T}\notin \mathcal{S}_{\rm notiling}$).  The argument is similar to that used to prove Theorem \ref{thm1}, but we use $\rho_1$ to produce the diagonal (there are no neon tiles to consider).
By $\rho_1$, the proposition $\Square$ is true, which yields points $a_{0,1}$, $a_{1,1}$ and $a_{1,0}$, reached successively in following $\North\comp\East\comp\South$, with $\West$ taking $a_{1,0}$ back to $a_{0,0}$, and with $\East\comp\North\comp\West\comp\South$ following through the points in reverse order.  Now, by $\rho_1$ again, $\Square$ is true at $a_{1,1}$.  Thus we obtain points $a_{1,2}$, $a_{2,2}$ and $a_{2,1}$ forming the rest of a new square based at $a_{1,1}$.  Now we can fill out these points to a $2\times 2$ region using the same argument in the proof of Theorem \ref{thm1}.  Then $\rho_1$ guarantees that $\Square$ is true at $a_{2,2}$ and so on.  Finally, once an $\omega\times\omega$ grid is interpreted, we can use $\rho_2$ to show that precisely one tiling proposition is true at $a_{0,0}$, and then force a tiling as in the proof of Theorem \ref{thm1}.

Now observe that if $\mathcal{T}$ can tile periodically: that is, can tile the torus $\mathbb{Z}_n\times\mathbb{Z}_m$, then $\gamma_\mathcal{T}$ can be satisfied in some finite model based on the $nm$ points of $\mathbb{Z}_n\times\mathbb{Z}_m$.

Thus the set $\mathscr{S}$ of finitely satisfiable propositions contains $\{\gamma_\mathcal{T}\mid \mathcal{T}\in S_{\rm period}\}$ and is disjoint from $\{\gamma_\mathcal{T}\mid \mathcal{T}\in S_{\rm notiling}\}$.  Now $\mathscr{S}$  is recursively enumerable (simply search for a finite satisfying model).  But it cannot be recursive, because $S_{\rm period}$ and $S_{\rm notiling}$ are recursively inseparable.  Hence $\mathscr{S}$ is not \textsf{coRE}.  Whence the propositions valid over finite models of \textsf{VPDL} is not \textsf{RE}.
\end{proof}
We mention that in order to express $\Fix$ in terms of program equivalence we invoked the program \texttt{skip}.  In the absence of \texttt{skip} (whence also query, as $\texttt{skip}=\top?$), it is unclear if Theorem \ref{thm1} and Theorem \ref{thm2} hold (replacing  $\fix$ by program equivalence).  However all of the arguments relating to the encoding of tilings can be routinely adapted to the program equivalence situation, with some simplification.  As a sketch: work with only $\North$ and $\East$, and replace the proposition $\Square$ by statements of the form $(\North\comp\East)\bowtie (\East\comp\North)$.

\medskip

\noindent{\bf Acknowledgement}\\
The authors are indebted to Dr.\ Tim Stokes for initiating the investigation into program equivalence in publications such as \cite{dfssto1,dfssto2,sto} as well as for numerous discussions and feedback during the writing of this article.

\end{document}